\newcommand{\name}[0]{KD-Club\xspace}
\DeclareRobustCommand\onedot{\futurelet\@let@token\@onedot}
\def\@onedot{\ifx\@let@token.\else.\null\fi\xspace}
\def\eg{\emph{e.g}\onedot} 
\def\ie{\emph{i.e}\onedot}
\title{KD-Club: An Efficient Exact Algorithm with New Coloring-based Upper Bound\\for the Maximum $k$-Defective Clique Problem}
\author{
    Mingming Jin\equalcontrib, %\dag{The first two authors contribute equally.},  % mingmingk@hust.edu.cn
    Jiongzhi Zheng\equalcontrib,    % jzzheng@hust.edu.cn
     Kun He\thanks{Corresponding author.}
    \\
}
\author{
    %Authors
    % All authors must be in the same font size and format.
    Written by AAAI Press Staff\textsuperscript{\rm 1}\thanks{With help from the AAAI Publications Committee.}\\
    AAAI Style Contributions by Pater Patel Schneider,
    Sunil Issar,\\
    J. Scott Penberthy,
    George Ferguson,
    Hans Guesgen,
    Francisco Cruz\equalcontrib,
    Marc Pujol-Gonzalez\equalcontrib
}
\title{My Publication Title --- Single Author}
\author {
    Author Name
}
\title{My Publication Title --- Multiple Authors}
\author {
    % Authors
    First Author Name\textsuperscript{\rm 1,\rm 2},
    Second Author Name\textsuperscript{\rm 2},
    Third Author Name\textsuperscript{\rm 1}
}
\newtheorem{lemma}{Lemma}
\begin{document}

\maketitle

\begin{abstract}
The Maximum $k$-Defective Clique Problem (MDCP) aims to find a maximum $k$-defective clique in a given graph, where a $k$-defective clique is a relaxation clique missing at most $k$ edges. MDCP is NP-hard and finds many real-world applications in analyzing dense but not necessarily complete subgraphs. Exact algorithms for MDCP mainly follow the Branch-and-bound (BnB) framework, whose performance heavily depends on the quality of the upper bound on the cardinality of a maximum $k$-defective clique. The state-of-the-art BnB MDCP algorithms calculate the upper bound quickly but conservatively as they ignore many possible missing edges. In this paper, we propose a novel CoLoring-based Upper Bound (CLUB) that uses graph coloring techniques to detect independent sets so as to detect missing edges ignored by the previous methods. We then develop a new BnB algorithm for MDCP, called KD-Club, using CLUB in both the preprocessing stage for graph reduction and the BnB searching process for branch pruning. Extensive experiments show that KD-Club significantly outperforms state-of-the-art BnB MDCP algorithms on the number of solved instances within the cut-off time, having much smaller search tree and shorter solving time on various benchmarks.
\end{abstract}

\section{Introduction}
%The Maximum Clique Problem (MCP) is a classical NP-hard combinatorial optimization problem with applications such as bioinformatics~\cite{app-bio} and fault diagnosis~\cite{app-fault}. However, the high requirement of MCP on edge connectivity makes it unsuitable for many sparse real-world applications. In that case, some relaxations of the clique, such as $k$-plexes~\cite{k-plex-pro,k-plex-KPLEX,k-plex-kplexS}, quasi-cliques~\cite{quasi-cliques,quasi-cliques2,quasi-cliques3}, $k$-clubs~\cite{k-club,k-club2}, and $k$-defective cliques~\cite{k-defective-pro}, were proposed. Among these relaxations, $k$-defective clique is widely used in social network analysis~\cite{app-social} and transportation science~\cite{app-trans1,app-trans2}.

Investigating structured subgraphs is a practical task with numerous demands in many optimization problems and real-world applications. The clique model is a famous and well-studied structured subgraph where any two distinct vertices are restricted to be adjacent. However, in many real-world applications, such as biological networks~\cite{k-defective-pro}, 
social networks~\cite{k-plex-pro}, and community detection~\cite{ConteMSGMV18,Yang22KBS}, dense subgraphs need not be complete but allow missing a few connections. Thus, many relaxations clique structures, such as the quasi-clique~\cite{quasi-cliques-pro}, $k$-plex~\cite{k-plex-pro}, $k$-defective clique~\cite{k-defective-pro}, etc., have been proposed.

Among these relaxations, the $k$-defective clique allows missing at most $k$ edges over a clique, and the $k$-plex allows missing at most $k - 1$ edges for each vertex. Obviously, the relaxation of the $k$-defective clique is between the clique and the $k$-plex, \ie, a clique must be a $k$-defective clique, and a $k$-defective clique must be a $(k+1)$-plex. The clique problem has been well studied in the past decades. Recently, the $k$-plex also attracted much attention~\cite{k-plex-kplexS,WangZXK22,RGB,jiang2023Dise,wang2023fast}, yet there are relatively few studies on the $k$-defective clique~\cite{MADEC,KDBB,DaiLLW23} which also has wide applications, such as social network analysis~\cite{WWW_social_network}, transportation~\cite{app-trans2}, clustering~\cite{Clustering}, and protein interaction prediction~\cite{k-defective-pro}. In this paper, we address the Maximum $k$-Defective Clique Problem (MDCP), which aims to find the maximum $k$-defective clique in a given graph, and we focus on its exact solving.

%Let $G=(V,E)$ be an undirected graph, where $V$ is the set of vertices and $E$ is the set of edges. Given an integer $k$, a $k$-defective is a subgraph $S$ of $G$, allowing at most $k$ edges missing from a complete graph. The Maximum $k$-Defective Clique Problem (MDCP) is intended to find a $k$-defective with the largest number of vertices. The MCP is a special case of the MDCP at $k = 1$. Similar to the complexity of MCP, the MDCP is also a NP-hard combinatorial optimization problem.

Since the $k$-defective clique is a relaxation of the clique, the difficulty of solving MDCP is as hard as finding the maximum clique in a given graph, which is a famous NP-hard problem. 
Some exact algorithms for MDCP have been proposed, coming up with a series of efficient techniques, such as reduction rules and upper bounds. For representative methods, there are the generic RDS algorithm~\cite{RDS-proposed,RDS} and a branch-and-price framework~\cite{k-defective-bnb} for various relaxed clique problems, including MDCP, and the most recent MADEC$^+$~\cite{MADEC} and KDBB~\cite{KDBB} algorithms that proposed some new upper bounds and reduction rules. As the state-of-the-art exact algorithms for MDCP, both MADEC$^+$ and KDBB follow the branch-and-bound (BnB) framework~\cite{McCreeshPT17,JiangLLM18}.

%Recently, some new reduction and pruning strategies are proposed in the MADEC$^+$ algorithm~\cite{MADEC}, and the KDBB algorithm~\cite{KDBB} proposes some new upper bounds for MDCP. Both MADEC$^+$ and KDBB, state-of-the-art exact MDCP algorithms, follow the branch-and-bound (BnB) framework.

%The first exact algorithm~\cite{k-defectice-firstexact} was based on the Russian Doll Search (RDS)~\cite{RDS-proposed}. Then an improved RDS~\cite{RDS} was proposed by a new incremental verification procedure with a better worst-case runtime. In recent years, most of the exact algorithms follow the branch-and-bound (BnB) framework. In 2021, a BnB algorithm~\cite{k-defective-bnb} was proposed to solve MDCP and other relaxed cliques problems. Then some new reduction and pruning strategies were proposed in MADEC$^+$~\cite{MADEC}. In 2022, KDBB~\cite{KDBB} put forward some new upper bounds, which used in preprocessing stage, reduction stage and pruning stage.

A BnB algorithm for MDCP usually maintains a growing partial solution $S$ (\ie, a $k$-defective clique) and the corresponding candidate vertex set $C$. Reduction or pruning is performed if the upper bound of the size of the maximum $k$-defective clique containing $S$ is no larger than the size of the maximum $k$-defective clique found so far (\ie, lower bound). %Pruning is performed by calculating the upper bound of the current solution and comparing it with the lower bound currently obtained. 
%In that case, a high-quality upper bound has a great influence on the efficiency of the algorithm. 
Therefore, the quality of the upper bound greatly influences the algorithm's efficiency. MADEC$^+$ proposes an upper bound based on graph coloring. Note that an independent set in a graph is a vertex set where any two distinct vertices are non-adjacent. MADEC$^+$ uses graph coloring methods to assign each vertex color with the constraint that adjacent vertices cannot be in the same color, so as to partition the entire subgraph of $G$ induced by $V \backslash S$ into independent sets and then calculates the upper bound. Such an upper bound regards $S$ and $V \backslash S$ as entirely independent and increases significantly with the increase of $k$.

KDBB proposes some new upper bounds that focus on the missing edges between candidate vertices and vertices in $S$, which increase softly with the increase of $k$. Thus, KDBB shows excellent performance for MDCP instances with large values of $k$. The upper bounds in KDBB also lead to efficient reduction rules, helping KDBB significantly reduce massive sparse graphs. However, these upper bounds are still not very tight since they ignore the missing edges between candidate vertices. In other words, they regard the entire candidate set $C$ as a clique, which is over-conservative.
%due to ignoring the connection relationship between the point of the candidate set.

To address the above issues, we propose a new upper bound based on graph coloring, called \textbf{C}o\textbf{L}oring-based \textbf{U}pper \textbf{B}ound (\textbf{CLUB}), that considers the missing edges between not only vertices in $C$ and vertices in $S$ but also vertices in $C$ themselves. %The main ideas are as follows. 
The main idea is that for a subset $C'$ of the candidate set $C$, we use the graph coloring method to partition $C'$ into $r$ independent sets. Then, when we want to add $t$ vertices (suppose $r < t \leq |C'|$) from $C'$ to $S$, besides the missing edges between the $t$ vertices and vertices in $S$ that might exist, there must be missing edges between the $t$ added vertices since there must be added vertices belonging to the same independent set. Since CLUB considers the missing edges more thoroughly, it is strictly no worse than the upper bounds proposed in KDBB.
%the number of increased missing edges must be larger than the number of missing edges between the $t$ vertices and the vertices in $S$.
%is at least the number of missing edges between the $t$ vertices and the vertices in $S$ plus $t - ub$.

%To handle this issue, we propose a Unified Coloring-based Upper Bound (UCUB) for MDCP, which can be used for both the graph reduction in preprocessing and pruning branches during the search. For instance, we color the vertices in the candidate set, thinking there is a clique with the size of chromatic number $\chi$, instead of treating the candidate set as a clique. When there are more than $\chi$ vertices, an additional number of disconnected edges will be considered, resulting in a tighter upper bound.

Our proposed CLUB is a generic approach that can be used not only during the BnB searching process to prune the branches but also in preprocessing for graph reduction. Based on CLUB, we propose a new BnB algorithm for MDCP, called %\textbf{KDCL} 
\textbf{\name} 
(maximum \textbf{K}-\textbf{D}efective clique algorithm with \textbf{CLUB}). Since the upper bound in MADEC$^+$ increases significantly with the increase of $k$, MADEC$^+$ does not work well for MDCP instances with large $k$ values. Since the reduction rules in KDBB cannot help it reduce dense graphs, KDBB does not work well for MDCP instances based on dense graphs. Our proposed CLUB significantly outperforms the previous upper bounds for MDCP by considering the connectivity between vertices more thoroughly, helping the BnB algorithm significantly reduce the search space. Therefore, \name has excellent performance and robustness for solving MDCP instances based on either massive sparse or dense graphs, with either small or large $k$ values, as shown in our experiments.

\section{Preliminaries}
\label{sec-Pre}
%We only consider simple and undirected graphs in this paper. 
%This section first provides some definitions and notations, and then introduces some reduction rules proposed in the KDBB algorithm~\cite{KDBB}, which will also be used in the preprocessing of our \name algorithm. 
%This section introduces some definitions, and two reduction rules used in the preprocessing of the KDBB algorithm.%~\cite{KDBB}.

\subsection{Definitions}
Given an undirected graph $G=(V,E)$, where $V$ is the vertex set and $E$ the edge set, the density of $G$ is $2|E|/(|V|(|V|-1))$. %Each edge $e \in E$ is denoted by its two endpoints $(u,v)$. 
We define $\overline{G} = (V,\overline{E})$ as the complementary graph of $G$, \ie, $\overline{E} = \{(u,v) | u \in V \wedge v \in V \wedge (u,v) \notin E\}$. 
%The vertex set and edge set of graph $G$ can also be denoted as $V(G)$ and $E(G)$, respectively. %$(u,v)\in \overline{E}$ iff $(u,v) \notin E$. %The density of $G$ is computed as $2m/(n(n-1))$.

We refer to adjacent vertices as neighbors to each other and denote the set of neighbors to $v$ in $G$ as $N_G(v)$, $|N_G(v)|$ is the degree of $v$ in $G$, 
and the common neighbor of two vertices $u, v$ as $N_G(u,v) = N_G(u)\cap N_G(v)$. Moreover, we define $N_G[v] = N_G(v)\cup \{v\}$ and $N_G[u,v] = N_G(u,v)\cup \{u,v\}$. %, and $\oplus N_G(u,v) = N_G(u)\cup N_G(v) - N_G[u,v]$ as the set of vertices adjacent to exactly one of the two vertices $u$ and $v$. 
Given a vertex set $V' \subseteq V$, $G[V']$ is defined as the subgraph induced by $V'$. Given a positive integer $k$, $V' \subseteq V$ is a $k$-defective clique if $G[V']$ has at least ${|V'|\choose 2}-k$ edges.

%Moreover, we define $N_G[v] = N_G(v)\cup \{v\}$. Similarly, we denote the common neighbor of two vertices $u$ and $v$ as $N_G(u,v) = N_G(u)\cap N_G(v)$, and $N_G[u,v] = N_G(u,v)\cup \{u,v\}$. Besides, $\oplus N_G(u,v) = N_G(u)\cup N_G(v) - N_G[u,v]$ denotes the set of vertices connected to only one of the two vertices $u$ and $v$. 

%In the rest of this paper, 
We use $S\subseteq V$ to denote a growing partial solution (\ie, a growing $k$-defective clique) in BnB algorithms, and $C \subseteq V\backslash S$ as the corresponding candidate vertex set of $S$. The size of the maximum $k$-defective clique in $G$ that includes all vertices in $S$ is denoted by $\omega_{G,k}(S)$, and the size of the maximum $k$-defective clique in $G$ is $\omega_{G,k}(\emptyset)$.

%In the BnB algorithm, $S\subseteq V$ is the growing partial solution and $d_S(v)$ is the number of neighbors of $v$ in $S$. $G[S] = (S,E')$ is the induced subgraph in $G$ by $S$, where $E'$ contains edges whose both vertices are in $S$. Besides, we define the candidate set as $C$, where $C\subseteq V\backslash S$ and we define $\omega_{G,k}(S)$ as the size of the maximum $k$-plex in $G$ that includes all vertices in $S$. Give a positive integer $k$, $G[S]$ is a $k$-defective if $G[S]$ has at least ${|S|\choose 2}-k$ edges, the size of the maximum $k$-defective in $G$ denoted by $\kappa(G)$ in this paper.

\subsection{Revisiting the Reduction Rules of KDBB}

The KDBB algorithm defines function $rem_{G,k}(S,v) = k-|E(\overline{G}[S\cup\{v\}])|$ as the remaining number of allowed missing edges of $S \cup \{v\}$ to form a feasible $k$-defective clique, and function $res_{G,k}(S,v) = \min\{rem_{G,k}(S,v), |C\backslash N_{G[C\cup \{v\}]}[v]|\}$ as the maximum number of extra vertices in $C$ non-adjacent to $v$ that are allowed to be added to $S \cup \{v\}$. KDBB proposes an upper bound of $\omega_{G,k}(S \cup \{v\})$ calculated as $UB_{G,k}(S,v) = |S\cup\{v\}|+|N_{G[C]}(v)|+res_{G,k}(S,v)$, which actually considers $C$ as a clique and adding a vertex from $C\backslash N_{G[C\cup \{v\}]}[v]$ to $S$ only leads to one more missing edge.

%regards $N_{G[C]}(v)$ as a clique. 
%follows.
%\begin{equation}
%\label{UB1}
%    UB_{G,k}(S,v) = |S\cup\{v\}|+|N_{G[C]}(v)|+res_{G,k}(S,v).
%\end{equation}

Similarly, KDBB defines function $rem_{G,k}(S,u,v) = k-|E(\overline{G}[S\cup\{u,v\}])|$ as the remaining number of allowed missing edges of $S \cup \{u,v\}$ to form a feasible $k$-defective clique, and function $res_{G,k}(S,u,v) = \min\{rem_{G,k}(S,u,v), |C\backslash N_{G[C\cup \{u,v\}]}[u,v]|\}$ as the maximum number of extra vertices in $C$ non-adjacent to $u$ or $v$ that are allowed to be added to $S \cup \{u,v\}$. And KDBB proposes another upper bound of $\omega_{G,k}(S \cup \{u,v\})$ calculated as $UB_{G,k}(S,u,v) = |S\cup\{u,v\}|+|N_{G[C]}(u,v)|+res_{G,k}(S,u,v)$, regarding $C$ as a clique also.
%follows.
%\begin{equation}
%\label{UB2}
%\begin{aligned}
%    UB_{G,k}&(S,u,v) = \\&|S\cup\{u,v\}|+|N_{G[C]}(u,v)|+res_{G,k}(S,u,v).
%\end{aligned}
%\end{equation}

Given a lower bound $LB$ of $\omega_{G,k}(\emptyset)$, KDBB proposes the following two reduction rules based on the above bounds.

\textbf{Rule 1.} Remove vertex $v$ from $G$ if it satisfies $UB_{G,k}(\emptyset,v)\le LB$.

\textbf{Rule 2.} Remove edge $(u,v)$ from $G$ if it satisfies $UB_{G,k}(\emptyset,u,v)\le LB$.

%\section{CLUB: A Coloring-based Upper Bound}
\section{Coloring-based Upper Bound}
This section introduces our proposed upper bound, CLUB. We first introduce the main idea and definition of CLUB, then present an example for illustration. In the end, we introduce our ColorBound algorithm to calculate the CLUB.

%Given a graph $G=(V,E)$, and an integer $k$, let $S\subseteq V$ be a partial solution for MDCP in $G$ and $C$ be the candidate set. $\gamma$ denotes the current number of missing edges in $S$.

\subsection{Main Idea}%s of CLUB}
%We define $\{P_0,\cdots,P_k\}$ as a partition of $C$, and for each vertex in $P_i$, if it is added to $S$, at least $i$ missing edges will be added. We perform the following steps for $C$ to obtain such a division.

Suppose $\{C_0,\cdots,C_k\}$ is a partition of $C$ \textit{w.r.t.} $S$, where $C_i$ is the set of vertices with $i$ non-adjacent vertices in $S$, \ie, $C_i = \{v | v \in C \wedge |N_G(v) \cap S| = |S| - i\}$. In other words, adding each vertex $v \in C_i$ to $S$ leads to $i$ missing edges between $v$ and vertices in $S$. Then, we can summarize the following important lemma which implies our main idea.

\begin{lemma}
\label{lemma-1}
    Suppose $C_i$ can be partitioned into $r_i$ independent sets $\{I_{i,1},\cdots,I_{i,r_i}\}$, adding any $1 \leq t \leq |C_i|$ vertices in $C_i$ to $S$ leads to at least $\mathcal{N}(t) = c\times \frac{d(d+1)}{2} + (r_i - c) \times \frac{d(d-1)}{2}$ more missing edges between the $t$ added vertices, where $d = \lfloor t / r_i \rfloor$ and $c = t - r_i d$.
\end{lemma}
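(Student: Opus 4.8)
The plan is to reduce the statement to a purely combinatorial optimization problem about distributing $t$ items among $r_i$ bins. First I would observe that since each $I_{i,j}$ is an independent set of $G$, any two vertices taken from the same part $I_{i,j}$ are non-adjacent in $G$ and hence contribute a missing edge among the $t$ added vertices. Therefore, if the $t$ added vertices consist of $n_j$ vertices from $I_{i,j}$ for $j=1,\dots,r_i$ (so $n_j\ge 0$ and $\sum_{j=1}^{r_i} n_j = t$), the number of missing edges among these $t$ vertices is at least $\sum_{j=1}^{r_i}\binom{n_j}{2}$; we are simply discarding the missing edges that may additionally occur between different parts, which can only increase the true count. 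Thus it suffices to prove that $\min \sum_{j=1}^{r_i}\binom{n_j}{2} = \mathcal{N}(t)$ over all such integer vectors. Note that feasibility with respect to the sizes $|I_{i,j}|$ is irrelevant for the direction we need: restricting the feasible set only raises the minimum, so the lower bound $\ge \mathcal{N}(t)$ survives regardless of how unbalanced the parts are.

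The core step is an exchange (smoothing) argument. Suppose a configuration $(n_1,\dots,n_{r_i})$ has two parts with $n_a \ge n_b + 2$. Moving one unit from part $a$ to part $b$ changes the objective by $\bigl(\binom{n_a-1}{2}+\binom{n_b+1}{2}\bigr) - \bigl(\binom{n_a}{2}+\binom{n_b}{2}\bigr) = -(n_a-1) + n_b < 0$, a strict decrease. Since the objective is a nonnegative integer, iterating this operation must terminate, and at termination no two parts differ by $2$ or more; hence every part equals $d=\lfloor t/r_i\rfloor$ or $d+1$, and counting forces exactly $c = t - r_i d$ parts to equal $d+1$ and the remaining $r_i - c$ parts to equal $d$. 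This configuration is unique up to permutation, so its objective value is the minimum, namely $c\binom{d+1}{2} + (r_i-c)\binom{d}{2} = c\cdot\frac{d(d+1)}{2} + (r_i-c)\cdot\frac{d(d-1)}{2} = \mathcal{N}(t)$.

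The routine portions are the arithmetic of the exchange inequality and the final substitution $\binom{d+1}{2}=\tfrac{d(d+1)}{2}$, $\binom{d}{2}=\tfrac{d(d-1)}{2}$. The only point requiring a little care is phrasing the exchange argument as a genuine termination/uniqueness argument: each step strictly decreases a bounded-below integer objective, so the process halts, and the halting configuration is determined up to relabeling, which pins the minimal value down uniquely. As an alternative one could invoke convexity of $x\mapsto\binom{x}{2}$ together with a majorization argument, but the exchange argument is self-contained and avoids appealing to majorization theory, so that is the route I would take.
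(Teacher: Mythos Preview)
Your proposal is correct and follows essentially the same approach as the paper: lower-bound the missing edges among the $t$ chosen vertices by $\sum_{j}\binom{n_j}{2}$, then argue the minimum over integer distributions with $\sum_j n_j=t$ is attained at the balanced configuration with $c$ parts equal to $d+1$ and $r_i-c$ parts equal to $d$. Your exchange/termination argument makes rigorous what the paper leaves as ``obviously'' (via the rewriting $\tfrac{1}{2}(\sum_j n_j^2 - t)$ and the heuristic that each $n_j$ should be $t/r_i$), and your remark that ignoring the constraints $n_j\le |I_{i,j}|$ only lowers the minimum is a useful clarification the paper omits.
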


\begin{proof}
    Suppose the $t$ vertices contain $d_j$ vertices in $I_{i,j}$, then the number of missing edges between the $t$ vertices is at least $\sum_{j=1}^{r_i}\frac{d_j(d_j-1)}{2} = \frac{1}{2}\left(\sum_{j=1}^{r_i}d_j^2 - t\right)$. Obviously, to make the above lower bound as small as possible, every $d_j$ is expected to be $t/r_i$. Since $d_j$ must be an integer, every $d_j$ should be either $\lceil t / r_i \rceil$ or $\lfloor t / r_i \rfloor$ to minimize the lower bound. In this case, there are $t \% r_i$ independent sets containing $\lceil t / r_i \rceil$ vertices in $t$ and the rest of the $r_i - t \% r_i$ independent sets contain $\lfloor t / r_i \rfloor$ vertices in $t$, which results in the lower bound in Lemma~\ref{lemma-1}.
\end{proof}

%One can see that 
Lemma~\ref{lemma-1} provides a lower bound, \ie, $\mathcal{N}(t)$, of the number of increased missing edges, with which we can calculate an upper bound of the number of vertices that $C$ can provide for $S$, resulting in the proposed CLUB. 

Lemma~\ref{lemma-1} also indicates that the smaller the value of $r_i$, the more missing edges by adding vertices in $C_i$ to $S$. KDBB actually regards $r_i = |C_i|$ since it regards $C$ as a clique for the bound calculation, while our CLUB uses graph coloring methods to partition $C_i$ and determine the value of $r_i$. So CLUB is strictly no worse than the upper bounds in KDBB.

%Firstly, adding $t \leq r_i$ vertices in $C_i$ to $S$ might lead to no missing edges between the $t$ vertices, since each vertex might belong to an independent set and the set of the $t$ vertices might be a clique. Secondly, adding $r_i + 1$ vertices in $C_i$ to $S$ leads to at least one missing edge between the $r_i + 1$ vertices, since there must be two vertices belonging to the same independent set. More generally, we can summarize the following important Lemma, which implies our main ideas.

%Since after adding $t$ vertices in $C_i$ to $S$, any other vertex $v \in C_i$ must be 

%suppose $t > r_i$ vertices in $C_i$ have been added to $S$, adding any other vertex $v \in C_i$ to $S$ leads to at least $\lfloor t / r_i \rfloor$ missing edges between $v$ and the $t$ vertices. Obviously, the smaller the value of $r_i$, adding vertices in $C_i$ to $S$ leads to more missing edges. The KDBB algorithm actually regards $r_i = |C_i|$, and our CLUB uses graph coloring methods to partition $C_i$ and determine the value of $r_i$.

\subsection{Definition of CLUB}

For each set $C_i$ that can be partitioned into $r_i$ independent sets, we randomly re-partition $C_i$ into $m_i = \lceil |C_i| / r_i \rceil$ disjoint subsets $\{I'_{i,1},\cdots,I'_{i,m_i}\}$. If $r_i$ divides $|C_i|$, each of the $m_i$ sets contains $r_i$ vertices in $C_i$. Otherwise, $I'_{i,m_i}$ contains $|C_i| \% r_i$ vertices in $C_i$ and each of the remaining $m_i - 1$ sets contains $r_i$ vertices in $C_i$. We assume that vertices in $I'_{i,j}$ $(1 < j \leq m_i)$ can be added to $S$ only when all vertices in $\cup_{l=1}^{j-1}I'_{i,l}$ have been added to $S$.
%, and try to use Lemma 1 for counting the least number of missing edges caused by adding vertices from $C_i$ to $S$.

According to Lemma 1, when we add $r_i$ vertices in $I'_{i,1}$ to $S$, we have $\mathcal{N}(1) = \cdots = \mathcal{N}(r_i) = 0$, which may only occur when the $r_i$ vertices belong to different independent sets. When we further add $r_i$ vertices in $I'_{i,2}$ to $S$, we have $\mathcal{N}(r_i + 1) - \mathcal{N}(r_i) = \mathcal{N}(r_i + 2) - \mathcal{N}(r_i + 1) = \cdots = \mathcal{N}(2r_i) - \mathcal{N}(2r_i - 1) = 1$. Similarly, we have $\mathcal{N}(jr_i + 1) - \mathcal{N}(jr_i) = \mathcal{N}(jr_i + 2) - \mathcal{N}(jr_i + 1) = \cdots = \mathcal{N}((j+1)r_i) - \mathcal{N}((j+1)r_i - 1) = j$. In summary, adding any vertex in $I'_{i,j+1}$ to $S$ leads to at least $i + j$ more missing edges.

%We assume that each subset $I'_{i,j}$ is a clique, which only occurs when vertices in $I'_{i,j}$ belong to different independent sets. We further assume that vertices in $I'_{i,j}$ $(1 < j \leq m_i)$ can be added to $S$ only when all vertices in $\{I'_{i,1},\cdots,I'_{i,j-1}\}$ have been added to $S$. Actually, these assumptions lead to the lower bound introduced in Lemma~\ref{lemma-1}. In this case, adding each vertex $v \in I'_{i,j}$ to $S$ leads to $j-1$ missing edges between $v$ and vertices in $\{I'_{i,1},\cdots,I'_{i,j-1}\}$, and $j-1+i$ missing edges between $v$ and vertices in $S$.

Suppose $P_l = \cup_{j-1+i = l}{I'_{i,j}}$, then adding each vertex $v \in P_l$ to $S$ leads to $l$ more missing edges. We define function $Sub(v) = l | v \in P_l$ as the subscript of the subset that $v \in C$ belongs to, and define an ordered set of $C$ as $ord(C) = \{v_1,\cdots,v_{|C|}\}$ such that for any pair of $v_i$ and $v_j$, we have $Sub(v_i) \leq Sub(v_j) $ if $i < j$. With such an ordered set, we can calculate a lower bound on the number of increased missing edges caused by adding any $t$ vertices of $C$ to $S$, which is defined as $LB_{inc}(ord(C),t) = \sum_{i=1}^t{Sub(v_i)}$.

%We then extract $k+1$ disjoint subsets $\{P_0,\cdots,P_k\}$ from the candidate set $C$, where $P_l = \{v | v \in I'_{i,j} \wedge j-1+i = l\} = \cup_{j-1+i = l}{I'_{i,j}}$, \ie, adding each vertex $v \in P_l$ to $S$ leads to at least $l$ more missing edges. We denote $\mathcal{P} = \cup_{l=0}^k{P_l}$. Actually, adding any vertex $v \in C \backslash \mathcal{P}$ to $S$ leads to more than $k$ missing edges. Thus the candidate set can be reduced to be $\mathcal{P}$. We further define function $Part(v) = l | v \in P_l$ as the subscript of the subset that $v \in \mathcal{P}$ belongs to, and define an ordered set of $\mathcal{P}$ as $ord(\mathcal{P}) = \{v_1,\cdots,v_{|\mathcal{P}|}\}$ such that for any pair of $v_i$ and $v_j$, we have $Part(v_i) \leq Part(v_j) $ if $i < j$. With such an ordered set, we can calculate a lower bound on the number of increased missing edges caused by adding any $t$ vertices of $\mathcal{P}$ to $S$, which is defined as $LB_{inc}(ord(\mathcal{P}),t) = \sum_{i=1}^t{Part(v_i)}$.

Finally, we define our CLUB of $\omega_{G,k}(S)$ as follows.
\iffalse
\begin{equation}
\label{CLUB}
CLUB_{G,k}(S) = |S| + \mathcal{I},
\end{equation}
where 
\begin{equation}
\label{CLUB2}
\mathcal{I} = \max_{0 \leq i \leq |C_i|}{LB_{inc}(ord(C),i) \leq k - |E(\overline{G}[S])|}
\end{equation}
is an upper bound on the number of vertices that $C$ can provide for $S$ to form a feasible $k$-defective clique.
\fi

\begin{equation}
\label{CLUB}
\begin{aligned}
    %CLUB(G,S,k) = |S| + \\\max_{0 \leq i \leq |C_i|}{LB_{inc}(ord(C),i) \leq k - |E(\overline{G}[S])|}.
    CLUB_{G,k}(S) = |S| &+ \max\{i|0 \leq i \leq |C| \\&\wedge LB_{inc}(ord(C),i) \leq k - |E(\overline{G}[S])|\}.
    %0 \leq i \leq |C_i|}{LB_{inc}(ord(C),i) \leq k - |E(\overline{G}[S])|}.
\end{aligned}
\end{equation}

\subsection{An Example for Illustration}
In this subsection, we provide an example to show how the upper bound in KDBB and CLUB are calculated. Figure~\ref{fig:example} illustrates a growing partial 1-defective clique $S = \{v_0\}$ and its candidate set $C = \{v_1,v_2,v_3,v_4,v_5,v_6\}$, which can be partitioned into $C_0 = \{v_1,v_2,v_3,v_4,v_5\}$ and $C_1 = \{v_6\}$. Suppose the current lower bound $LB$ of $\omega_{G,1}(\emptyset)$ is 6.

KDBB regards $C_0$ and $C_1$ as cliques, \ie, adding all vertices in $C_0$ to $S$ does not increase any missing edges, and adding each vertex in $C_1$ to $S$ leads to one more missing edge. Thus, KDBB considers the upper bound of $\omega_{G,1}(S) = |S| + 6 = 7 > LB$, and the current branch cannot be pruned, \ie, $v_0$ will not be removed from the graph. 
%Calculating upper bound according to the method of KDBB, the number of neighbors of $v_0$ is 5, and one non-neighbor of $v_0$ can be added. Thus, the upper bound is 6, which is larger than the current lower bound 4.

In CLUB, $C_0$ and $C_1$ are partitioned into 3 and 1 independent sets, respectively (\eg, $I_{0,1} = \{v_1,v_4\}, I_{0,2} = \{v_2\}, I_{0,3} = \{v_3,v_5\}$, and $I_{1,1} = \{v_6\}$). Thus, we can re-partition $C_0$ into $I'_{0,1} = \{v_1,v_2,v_3\}, I'_{0,2} = \{v_4,v_5\}$, and $C_1$ into $I'_{1,1} = \{v_6\}$. And we have $P_0 = I'_{0,1} = \{v_1,v_2,v_3\}, P_1 = I'_{0,2} \cup I'_{1,1} = \{v_4,v_5,v_6\}$. According to Eq.~\ref{CLUB}, $CLUB_{G,1}(S) = |S| + 4 = 5 < LB$, and the current branch can be pruned, \ie, $v_0$ will be removed. % from the graph. 

%After coloring $C_0$, the color number is 3. In other words, only three vertices of $C_0$ do not donate missing edges, and other two vertices each donate one missing edge. Above all, the CLUB is 4, which is smaller than the current lower bound. The vertex $v_0$ can be removed or the branch can be cut.
\begin{figure}[t]
    \centering
    \includegraphics[width=0.9\columnwidth]%[width=1.0\columnwidth]
    {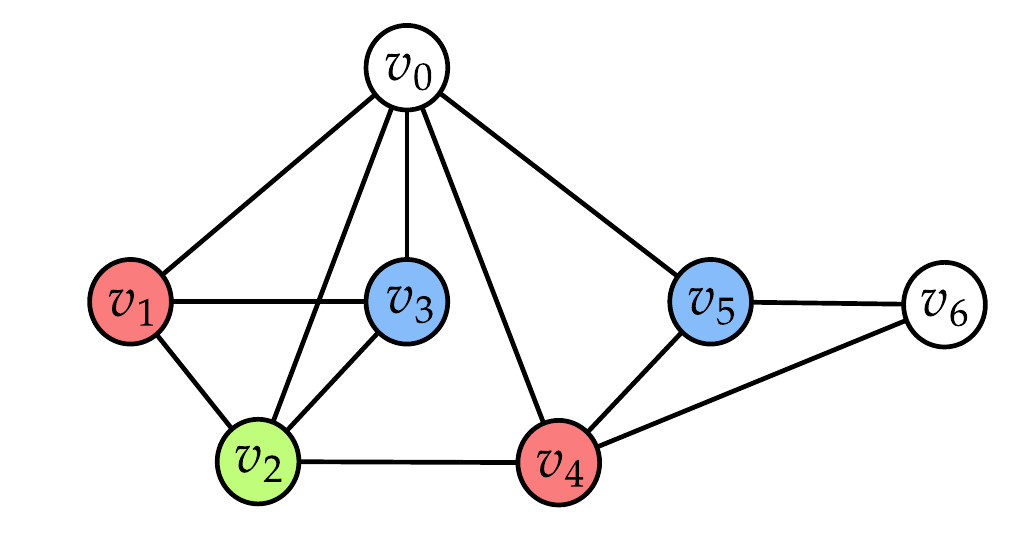}
    \caption{An example for %comparing
     the upper bound calculation.}
    \label{fig:example}
\end{figure}

\subsection{The ColorBound Algorithm}
We propose the ColorBound algorithm for practically calculating the proposed CLUB, as summarized in Algorithm~\ref{alg-ColorBound}. %The algorithm receives the MDCP instance $G=(V,E)$, an integer $k$, the current partial $k$-defective, the candidate set $C$, and the current number of missing edges $\gamma$ as inputs, and outputs the CLUB of $\omega_k(G,S)$. 
The algorithm first uses $|S|$ to initialize the upper bound $UB$ (line 1), then uses the Extract function to extract subsets $\{P_0,\cdots,P_k\}$ from the candidate set $C$ (line 3), such that adding each vertex in $P_l$ to $S$ leads to at least $l$ more missing edges when adding the vertices sequentially according to $ord(C)$. Note that subsets $P_l$ with $l > k$ are ignored since there are at most $k$ missing edges in a $k$-defective clique. After that, CLUB is calculated by simulating the process of adding vertices sequentially until the number of allowed missing edges cannot afford one more vertex or all the candidate vertices have been added (lines 4-9).

Function Extract is summarized in Algorithm~\ref{alg-ColorBasedPartition}, which first partitions $C$ into subsets $\{C_0,\cdots,C_k\}$ according to the number of missing edges between each vertex and the vertices in $S$ and initializes each set $P_i$ to $\emptyset$ (lines 1-3). Then, for each subset $C_i \neq \emptyset$, the algorithm sequentially colors each vertex in $C_i$ with the minimum feasible index of color, satisfying that adjacent vertices cannot be in the same color, and obtains the value of $r_i$ (lines 4-6). Finally, the algorithm iteratively moves $r_i$ vertices, \ie, set $I'_{i,j}$, from $C_i$ to $P_{j-1+i}$ until $C_i$ is empty or $j-1+i > k$ (lines 8-12), which means adding any vertex in $I'_{i,j}$ to $S$ according to $ord(C)$ leads to more than $k$ missing edges.

%Finally, the algorithm uses the color number to move the vertices in the subset to the corresponding partition sets.

The time complexities of both Algorithms \ref{alg-ColorBound} and \ref{alg-ColorBasedPartition} are dominated by the graph coloring process, which needs to traverse each candidate vertex and its neighbors. Thus, their time complexities are both $O(D|C|^2)$, where $D$ is the maximum degree of vertices in $G$.

%The time complexities of both the ColorBound algorithm and the Partition() function are $O(d|C|)$, where $d$ is the maximum degree of $G$.

\begin{algorithm}[t]
\caption{ColorBound$(G,k,S,C)$}
\label{alg-ColorBound}
\LinesNumbered 
\KwIn{a graph $G$, an integer $k$, the current $k$-defective clique $S$, the candidate set $C$}
\KwOut{$CLUB_{G,k}(S)$}
initialize the upper bound $UB \leftarrow |S|$\;
initialize the number of allowed missing edges $\kappa \leftarrow k - |E(\overline{G}[S])|$\;
%$P\leftarrow $Partition($G,k,S,C$)\;
$\{P_0,\cdots,P_k\} \leftarrow $Extract($G,k,S,C$)\;
$UB \leftarrow UB + |P_0|$\;
\For{$i \leftarrow 1 : k$}{
\eIf{$i\times |P_i| \le \kappa$}{
$UB\leftarrow UB+|P_i|$, $\kappa \leftarrow \kappa - i\times|P_i|$\;
}{$UB\leftarrow UB + \lfloor \kappa/i \rfloor$, \textbf{break}\;}
}
\textbf{return} $UB$\;
\end{algorithm}

\begin{algorithm}[t]
\caption{Extract$(G,k,S,C)$}
\label{alg-ColorBasedPartition}
\LinesNumbered 
\KwIn{a graph $G$, an integer $k$, the current $k$-defective clique $S$, the candidate set $C$}
\KwOut{subsets $\{P_0,\cdots,P_k\}$ of $C$}
%$P\leftarrow \{P_0,P_1,\cdots,P_k\}, P_i = \emptyset$\;
%$C'\leftarrow \{C_0,C_1,\cdots,C_k\}, C_i = \emptyset$\;
%\For{\rm{\textbf{each}} vertex $v \in C$} {$g \leftarrow k-d_S(v)$\;
%$C_g\leftarrow C_g\cup\{v\}$}
\For{$i \leftarrow 0 : k$}{
$C_i \leftarrow \{v | v \in C \wedge |N_G(v) \cap S| = |S| - i\}$\;
initialize $P_i \leftarrow \emptyset$\;
}
\For{$i \leftarrow 0 : k$}{
\If{$C_i\neq \emptyset$}{
partition $C_i$ into $r_i$ independent sets by sequentially coloring the vertices\;
%$r\leftarrow $ graph\_coloring($G,C_i$)\;
$j\leftarrow 1$\;
\While{$|C_i|\geq r_i \wedge j - 1 + i \leq k$}{
$I'_{i,j} \leftarrow$ set of $r_i$ vertices in $C_i$\;
$C_i \leftarrow C_i \backslash I'_{i,j}$, $P_{j - 1 + i} \leftarrow P_{j - 1 + i} \cup I'_{i,j}$\;
%Move $r_i$ vertices from $C_i$ to $P_j$\;
$j\leftarrow j+1$\;
}
\lIf{$j - 1 + i \leq k$}{$P_{j - 1 + i} \leftarrow P_{j - 1 + i}\cup C_i$}
}
}
\textbf{return} $\{P_0,\cdots,P_k\}$\;
\end{algorithm}

%\section{KDCL: A New BnB Algorithm}
\section{Branch and Bound Algorithm}
We propose a new BnB algorithm for MDCP, called \name, where a new preprocessing method based on CLUB is introduced to reduce the graph, and the CLUB is also used to prune branches during the BnB process. We first present the main framework of \name, and then introduce the preprocessing method and the BnB process, respectively.

\subsection{General Framework}
The framework of \name is summarized in Algorithm~\ref{alg-Framework}. \name maintains a lower bound $LB$ of the size of the maximum $k$-defective clique in the input graph $G$, which is initialized by a method called \textit{FastLB} (line 1), which is also used in the MADEC$^+$~\cite{MADEC} and KDBB~\cite{KDBB} algorithms for calculating an initial lower bound. After calculating $LB$, the Preprocessing() function is called to reduce the graph (line 2), and the reduced graph is sent to the BnB() function to find the maximum $k$-defective clique (line 3). During the BnB process, $LB$ will be updated once a larger $k$-defective clique is found. After the BnB() function traverses the entire search tree, we have $LB = \omega_{G,k}(\emptyset)$.
%To find an initial lower bound, we employ the method called $FastLB$, which is often used to find a clique as the lower bound of MDCP~\cite{MADEC, KDBB}. Then the function $Preprocessing$ is called to reduce the graph. After that, the reduced graph is fed into the $BnB$ algorithm to obtain the maximum $k$-defective. In the $BnB$ function, it checks whether $S$ can form a larger $k$-defective than the current lower bound, and a binary branching strategy is used to search the maximum $k$-defective.

\begin{algorithm}[t]
\caption{\name$(G,k)$}
\label{alg-Framework}
\LinesNumbered 
\KwIn{a graph $G$, an integer $k$}
\KwOut{$\omega_{G,k}(\emptyset)$}
%$S\leftarrow \emptyset$\;
initialize $LB$ by a heuristic method \textit{FastLB}\;
$G\leftarrow$ Preprocessing$(G,k)$\;
$LB\leftarrow$ BnB$(G,k,\emptyset,null,LB)$\; 
\textbf{return} $LB$\;
\end{algorithm}

\begin{algorithm}[t]
\caption{Preprocessing$(G,k)$}
\label{alg-Preprocessing}
\LinesNumbered 
\KwIn{a graph $G$, an integer $k$}
\KwOut{the reduced graph $G$}
%$removed\leftarrow true$\;
$G\leftarrow check\_vertex\_with\_Rule1(G,k,LB)$ \;
$G\leftarrow check\_vertex\_with\_Rule3(G,k,LB)$ \;
$G\leftarrow check\_edge\_with\_Rule2(G,k,LB)$ \;
\While{true} {
%$removed\leftarrow false$\;
$G'\leftarrow check\_vertex\_with\_Rule3(G,k,LB)$ \;
$G'\leftarrow check\_edge\_with\_Rule4(G',k,LB)$ \;
\lIf{$G'$ and $G$ is the same}{\textbf{break}}
\textbf{else} $G \leftarrow G'$\;
}
\textbf{return} $G$\;
\vspace{0.5em}
Function $check\_vertex\_with\_Rule3(G,k,LB)$\\
$Q\leftarrow\emptyset$, add all vertices in $V$ to $Q$\;
\While{$Q$ is not empty} {
$v\leftarrow pop(Q)$, $N_v\leftarrow N_G(v)$\;
apply Rules 1 and 3 on $v$ with $LB$ and $k$\;
%$ub\leftarrow CLUB_{G,k}(\{v\})$\;
\If{$v$ is removed}{add all vertices in $N_v$ to $Q$\;}
%$ub\leftarrow ColorBound(G,k,\{v\},V\backslash\{v\})$\;
%\If{$ub\le LB$}{
%remove $v$ from $G$, add all vertices in $N_v$ to $Q$\;
%}
%apply Rule 1 on $v$ with $LB$ and $k$\;
%\If{$v$ is not removed}{
%$ub\leftarrow ColorBound(G,k,\{v\},V\backslash\{v\},0)$\;
%\If{$ub\le LB$}{remove $v$ from $V$\;}
%}
%\If{$v$ is removed}{add all vertices in $N_v$ to $Q$\;}
}
\textbf{return} $G$\;
\vspace{0.5em}
Function $check\_edge\_with\_Rule4(G,k,LB)$\\
$Q\leftarrow\emptyset$;\ add all edges in $E$ to $Q$\;
\While{$Q$ is not empty} {
$(u,v)\leftarrow pop(Q)$\;
apply Rules 2 and 4 on $(u,v)$ with $LB$ and $k$\;
\If{$(u,v)$ is removed}{add all edges adjacent to $u$ or $v$ in $E$ to $Q$\;}
%apply Rule 2 on $(u,v)$ with $LB$ and $k$\;
%\If{$(u,v)$ is not removed}{
%$ub\leftarrow ColorBound(G,k,\{u,v\},V\backslash\{u,v\},0)$\;
%\If{$ub\le LB$}{remove $(u,v)$ from $E$\;}
%}
%\If{$(u,v)$ is removed}{add all edges of $u$ and $v$ to $Q$\;}
}
\textbf{return} $G$\;
\end{algorithm}

\subsection{Preprocessing Method}
Preprocessing plays an essential role in solving massive sparse instances. Given a lower bound $LB$ of $\omega_{G,k}(\emptyset)$, we propose two new rules to reduce the graph based on CLUB.

\textbf{Rule 3.} Remove vertex $v$ from $G$ if it satisfies $CLUB_{G,k}(\{v\})\le LB$.

\textbf{Rule 4.} Remove  edge $(u,v)$ from $G$ if it satisfies $CLUB_{G,k}(\{u,v\})\le LB$.

Algorithm~\ref{alg-Preprocessing} shows the detailed procedure of the Preprocessing method, where four functions with Rules 1-4 are used to reduce the input graph $G$. The functions $check\_vertex\_with\_Rule1$ and $check\_edge\_with\_Rule2$ are derived from KDBB, which uses Rules 1 and 2 to remove vertices and edges from $G$, respectively. Similarly, the functions $check\_vertex\_with\_Rule3$ and $check\_edge\_with\_Rule4$ use our proposed Rules 3 and 4 to remove vertices and edges from $G$.

Since Rules 1 and 2 ignore the connectivity between vertices in the candidate set $C$ (we have $C = V \backslash \{v\}$ when trying to remove vertex $v$), they are computationally efficient for removing vertices with small degrees or edges whose endpoints have small degrees. Hence, our Preprocessing method uses Rules 1 and 2 to quickly remove vertices and edges that are \textit{easy} to remove before using our Rules 3 and 4 to further reduce the graph until no more vertices and edges can be removed. Similarly, before calculating our CLUB to reduce the graph, we also try to use Rules 1 and 2 to achieve the current reduction and save computation time (lines 14 and 22). In other words, CLUB is used only when the vertex or edge cannot be removed by Rules 1 and 2.

%The functions $check\_vertex$ and $check\_edge$ are proposed in KDBB~\cite{KDBB}, and we only use these two function to obtain a smaller graph, so that we can further reduce the graph using CLUB. Besides, we alternately remove vertices and edges until no redundant vertices and edges exist.

%Two sub-functions based on CLUB are introduced here for vertices and edges reduction. However, if CLUB is used when removing every vertex, the time complexity is $O(d|V|^2)$. To reduce the graph more efficiently, CLUB is used only if the rules in KDBB can't remove the vertex or the edge. With the initial bound, Rule 1 and the function $ColorBound$ are performed iteratively in function $check\_vertex\_with\_CLUB$ to reduce the input graph, and Rule 2 and the function $ColorBound$ are performed iteratively in function $check\_edge\_with\_CLUB$ to remove redundant edges.

\subsection{Branch and Bound Process}
The BnB process is depicted in Algorithm~\ref{alg-BnB}. The algorithm first calls function $reduction()$ used in KDBB~\cite{KDBB} to reduce the input graph $G$ according to the current solution $S$ (line 1), which actually removes each vertex $v \in V \backslash S$ (resp. edge $(u,v) \in E$) if $S \cup \{v\}$ (resp. $S \cup \{u,v\}$) is not a feasible solution. After which, we have the candidate set $C = V \backslash S$ (line 3). Then, the algorithm calculates CLUB and checks whether it is larger than the current lower bound (lines 4-5). If so, the algorithm will continue to search the subtree. Otherwise, the branch will be pruned.

For the selection of the branching vertex, we select the vertex in $C$ with the minimum degree (ties are broken randomly) to find larger $LB$ quickly and reduce the tree size for the subsequent searching. After selecting a branching vertex $u$, the algorithm uses a binary branching strategy, that is, either adding $u$ to $S$ or removing it from $G$ (lines 7-11).

\begin{algorithm}[t]
\caption{BnB$(G,k,S,v,LB)$}
\label{alg-BnB}
\LinesNumbered 
\KwIn{a graph $G$, an integer $k$, the current $k$-defective clique $S$, the branching vertex $v$, the lower bound $LB$}
\KwOut{$\omega_{G,k}(S)$}
%\lIf{$|E(\overline{G}[S])|>k$}{\textbf{return} $LB$}
\lIf{$v\neq null$}{$G\leftarrow reduction(G,k,S,v,LB)$}
%\lIf{$V\backslash S=\emptyset$}{\textbf{return} $|S|$}
\lIf{$|V| \leq LB$}{\textbf{return} $LB$}
$C \leftarrow V\backslash S$\;
$ub\leftarrow$ ColorBound$(G,k,S,C)$\;
\If{$ub>LB$}{
select a vertex $u$ in $C$ with the minimum degree\; %and update $\gamma$\;
$size\leftarrow$ BnB$(G,k,S\cup\{u\},u,LB)$\;
\lIf{$size > LB$}{$LB\leftarrow size$}
remove $u$ from $G$\; %, restore $G$\; %and update $\gamma$\;
$size\leftarrow$ BnB$(G,k,S,u,LB)$\;
\lIf{$size>LB$}{$LB\leftarrow size$}
}
\textbf{return} $LB$\;
\end{algorithm}

%\section{Experiments}
\section{Empirical Evaluation}
In this section, we first introduce the benchmarks and algorithms (also called solvers) used in the experiments, then present and analyze the experimental results. All the algorithms were implemented in C++ and run on a server using an AMD EPYC 7H12 CPU, running Ubuntu 18.04 Linux operation system. Since our machine is about 5-10 times faster than the machine in~\cite{KDBB}, which sets the cut-off time to 10,800 seconds for each instance, we set the cut-off time to 1,800 seconds in our experiments. % We select the state-of-art BnB algorithms, including MADEC$^+$\footnote{https://github.com/chenxiaoyu233/k-defective}~\cite{MADEC$^+$} and KDBB~\cite{KDBB}. Especially, the MADEC$^+$ always fails to solve the instances when $k>3$ without the preprocessing methon. Thus, we combine MADEC$^+$ with our preprocessing method to make an intensive analysis. Since KDBB does not disclose the source codes, we use the code that we reproduced according to the paper, whose results are much better than those broadcast in the paper.

\subsection{Benchmark Datasets}
We evaluated the algorithms on four public datasets that are widely used in existing works. 
\begin{itemize}
\item \textbf{Facebook\footnote{https://networkrepository.com/socfb.php}}: This dataset contains 114 massive sparse graphs derived from Facebook social networks, which is used in KDBB~\cite{KDBB}.
\item \textbf{Realword\footnote{http://lcs.ios.ac.cn/\%7Ecaisw/Resource/realworld\%20\\graphs.tar.gz}}: This dataset contains 139 massive sparse graphs from the Network Data Repository~\cite{RA15}, which is frequently used in studies related to relaxation clique models, including the $k$-defective clique and $k$-plex.
\item \textbf{SNAP\footnote{http://snap.stanford.edu/data/} and DIMACS10\footnote{https://www.cc.gatech.edu/dimacs10/downloads.shtml}}: This dataset contains 39 graphs with up to $1.05 \times 10^6$ vertices from the
Stanford large network dataset collection (SNAP) and the 10th DIMACS implementation challenge, which are used in both KDBB and MADEC$^+$~\cite{MADEC}. 
\item \textbf{DIMACS2\footnote{http://archive.dimacs.rutgers.edu/pub/challenge/graph/\\benchmarks/clique/}}: This dataset contains 49 dense graphs with up to 1,500 vertices from the 2nd DIMACS implementation challenge, which is used in MADEC$^+$.
\end{itemize}

For each graph, we generated six MDCP instances with $k = 1,3,5,10,15,20$ as KDBB did. Therefore, there are a total of $6 \times (114+139+39+49) = 2046$ MDCP instances.

% Please add the following required packages to your document preamble:
% \usepackage{multirow}
\begin{table*}[!t]
\centering
\footnotesize
%\resizebox{\linewidth}{!}{
\begin{tabular}{l|ccc|ccc|ccc|ccc}\toprule
\multicolumn{1}{l|}{\multirow{2}{*}{$k$}} &\multicolumn{3}{c|}{Facebook} & \multicolumn{3}{c|}{Realworld} & \multicolumn{3}{c|}{SNAP and DIMACS10} & \multicolumn{3}{c}{DIMACS2} \\
\multicolumn{1}{c|}{}                   & KD-Club     & \hspace{-0.75em} KDBB    & \hspace{-0.75em}MADEC$^+$    & KD-Club     & \hspace{-0.75em}KDBB     & \hspace{-0.75em}MADEC$^+$    & KD-Club        & \hspace{-0.75em}KDBB       & \hspace{-0.75em}MADEC$^+$       & KD-Club    & \hspace{-0.75em}KDBB    & \hspace{-0.75em}MADEC$^+$   \\ \hline
1                                       & \textbf{112}      & \hspace{-0.75em} 110     & \hspace{-0.75em}12      & \textbf{130}      & \hspace{-0.75em}124      & \hspace{-0.75em}81      & \textbf{39}          & \hspace{-0.75em}\textbf{39}         & \hspace{-0.75em}24          & 31      & \hspace{-0.75em}17      & \hspace{-0.75em}\textbf{32}      \\
3                                       & \textbf{112}      & \hspace{-0.75em} 110     & \hspace{-0.75em}0      & \textbf{125}      & \hspace{-0.75em}116      & \hspace{-0.75em}62      & \textbf{38}          & \hspace{-0.75em}\textbf{38}         & \hspace{-0.75em}23          & \textbf{27}      & \hspace{-0.75em}12      & \hspace{-0.75em}14      \\
5                                       & \textbf{111}      & \hspace{-0.75em} 109     & \hspace{-0.75em}0      & \textbf{121}      & \hspace{-0.75em}110      & \hspace{-0.75em}52       & 37          & \hspace{-0.75em}\textbf{38}         & \hspace{-0.75em}23          & \textbf{25}      & \hspace{-0.75em}12      & \hspace{-0.75em}12      \\
10                                      & \textbf{109}      & \hspace{-0.75em} 108     & \hspace{-0.75em}0       & \textbf{106}      & \hspace{-0.75em}94       & \hspace{-0.75em}28       & \textbf{34}          & \hspace{-0.75em}\textbf{34}         & \hspace{-0.75em}11          & \textbf{19}      & \hspace{-0.75em}11      & \hspace{-0.75em}6       \\
15                                      & \textbf{108}      & \hspace{-0.75em} 104     & \hspace{-0.75em}0        & \textbf{94}       & \hspace{-0.75em}70       & \hspace{-0.75em}22       & \textbf{30}          & \hspace{-0.75em}28         & \hspace{-0.75em}9          & \textbf{13}      & \hspace{-0.75em}11      & \hspace{-0.75em}2       \\
20                                      & \textbf{104}      & \hspace{-0.75em} 86      & \hspace{-0.75em}0        & \textbf{85}       & \hspace{-0.75em}59       & \hspace{-0.75em}16       & \textbf{27}          & \hspace{-0.75em}24         & \hspace{-0.75em}6           & \textbf{12}      & \hspace{-0.75em}11      & \hspace{-0.75em}1      
\\\bottomrule
\end{tabular}
%}

\caption{Summarized results of KD-Club, KDBB, and MADEC$^+$ on four benchmarks. The best results appear in bold.}
\label{tab-comp-results}
\end{table*}

%\hspace{-0.75em}

\subsection{Solvers}

To evaluate the performance of our proposed KD-Club algorithm, we select the state-of-the-art BnB MDCP algorithms, MADEC$^+$~\cite{MADEC} and KDBB~\cite{KDBB} as the baseline algorithms. To evaluate the effectiveness of CLUB in different stages of KD-Club, we generate two variant algorithms of KD-Club, called KD-Club$_\text{Pre}^-$ and KD-Club$_\text{BnB}^-$. Details %of all the algorithms involved in our experiments 
are as follows.

%KD-Club was compared with the state-of-the-art BnB algorithms MADEC$^+$\footnote{https://github.com/chenxiaoyu233/k-defective}~\cite{MADEC} and KDBB~\cite{KDBB}. Since MADEC$^+$ fails to solve most of instances for $k>3$, for further comparison with its bound, we combine our preprocessing method with it as KDBB dose, the solver denotes $\text{MADEC}_\text{P}^+$. In addition to this, there are two variant of our solver, $\text{KDBB}_\text{P}^+$ and $\text{KD-Club}_\text{P}^-$, which will be used to verify the validity of our methods during the ablation study.
\begin{itemize}
\item \textbf{MADEC$^+$}: A BnB MDCP algorithm with a rough coloring-based upper bound that increases sharply with the increment of $k$. It shows good performance on instances based on dense DIMACS2 graphs with small $k$ values. The source code is available at\footnote{https://github.com/chenxiaoyu233/k-defective}.
\item \textbf{KDBB}: The best-performing BnB MDCP algorithm for instances based on massive sparse graphs and instances with large $k$ values. Since its code has not been open source, we use our implemented version in the experiments, which shows better performance than the results reported in the literature.
\item \textbf{KD-Club}: An implementation of our algorithm\footnote{The source codes of KD-Club are available at https://github.com/JHL-HUST/KD-Club}.
\item \textbf{KD-Club$_\text{Pre}^-$}: A variant of KD-Club without the CLUB during preprocessing, \ie, replacing the preprocessing in KD-Club with that in KDBB; Also a variant of KDBB with the BnB searching process in KD-Club.
\item \textbf{KD-Club$_\text{BnB}^-$}: A variant of KD-Club without the CLUB during BnB searching, \ie, replacing the BnB searching process in KD-Club with that in KDBB; Also a variant of KDBB with the preprocessing method in KD-Club.
%\item $\text{MADEC$^+$}_\text{P}^+$: An enhanced solver of MADEC$^+$ incorporating our preprocessing algorithm.
%\item $\text{KDBB}_\text{P+}$: A variant of KDBB that uses our preprocessing method to replace its original one. This solver is mainly used to verify the effectiveness of CLUB during pruning branches.
%\item $\text{KD-Club}_\text{P}^-$: A variant of KD-Club, using the original KDBB preprocessing method, mainly used to validate the effectiveness of our CLUB during preprocessing.
\end{itemize}

\subsection{Performance Comparison}

We first compare KD-Club, KDBB, and MADEC$^+$ on all four benchmarks to evaluate the overall performance of these solvers. The results are summarized in Table~\ref{tab-comp-results}, which presents the number of instances that can be solved within the cut-off time by each algorithm in solving instances of each benchmark grouped according to the $k$ values.

From the results, one can see that KD-Club solves considerably more instances than the baselines on most groups of benchmark instances, especially on instances with larger $k$ values. Since the upper bound in MADEC$^+$ increases significantly with the increment of $k$, it fails to solve most instances with $k > 5$. With the increment of $k$, the reduction rules in KDBB can hardly reduce the graph. Thus, its performance also declines significantly. 
With the benefit of our CLUB that considers the missing edges more thoroughly, the preprocessing and BnB stages in KD-Club are both very efficient, and thus KD-Club shows excellent performance on various benchmarks, even with large $k$ values. Moreover, although the number of solved instances of KD-Club is not much larger than the baselines in solving instances with small $k$ values, the running time of KD-Club to solve the instances is usually much shorter than that of the baselines, as indicated by the follow-up experiments.

We further present the detailed results of KD-Club and KDBB in solving 20 representative instances from four benchmarks with $k = 3$ and $k = 10$, as shown in Table~\ref{tab-detailed-result}. The results include the number of vertices (column $|V|$) and edges (column $|E|$) of each original graph, the number of vertices (column $|V'|$) and edges (column $|E'|$) of each graph after reducing by the preprocessing method of each algorithm, the running time in seconds (column \textit{Time}) and the sizes of their entire search trees in $10^4$ (column \textit{Tree}) to solve the instances. The symbol `NA' means the algorithm cannot solve the instance within the cut-off time.

The results show that when solving massive sparse graphs, such as the Facebook instances started with `socfb', the preprocessing method based on CLUB can help KD-Club reduce the graph size to an order of magnitude smaller than the graph reduced by the preprocessing in KDBB, indicating a significant reduction. When solving dense graphs, such as the DIMACS2 instances C125-9, johnson8-4-4, and san200-0-7-1, the preprocessing cannot reduce any vertex or edge, while the BnB process based on CLUB can still help KD-Club solve these instances with much shorter running time as compared to the cut-off time, but KDBB cannot even solve them within the cut-off time. As a result, both the search tree sizes and running time of KD-Club are several orders of magnitude smaller than those of KDBB in solving these instances with either small or large $k$ values.

%Our first experiment is used to compare the overall performance gap between KD-Club and the two compared algorithms. The algorithms were tested on 341 instances with a cutoff time of 1800s for each instance. The results are detailed in Table~\ref{tab-comp-results} and are grouped by each value of $k=1,3,5,10,15,20$, respectively.

% Please add the following required packages to your document preamble:
% \usepackage{multirow}
\begin{table*}[!t]
    %\fontsize{9pt}{15}
    \footnotesize
    \centering
    % \resizebox{\linewidth}{!}{
    \begin{tabular}{lrr|rrrr:rrrr|rrrr:rrrr}\toprule
    \multicolumn{1}{l}{\multirow{3}{*}{Instance}} & \multicolumn{1}{r}{\multirow{3}{*}{\hspace{-1.1em}$|V|$}} & \multicolumn{1}{r|}{\multirow{3}{*}{\hspace{-1.1em}$|E|$}} & \multicolumn{8}{c|}{$k=3$}                                                                                                                                                                                           & \multicolumn{8}{c}{$k=10$}                                                                                                                                                                                          \\ \cline{4-19} 
    \multicolumn{1}{c}{}                           & \multicolumn{1}{c}{}                     & \multicolumn{1}{c|}{}                     & \multicolumn{4}{c:}{KD-Club}                                                                                & \multicolumn{4}{c|}{KDBB}                                                                                & \multicolumn{4}{c:}{KD-Club}                                                                                & \multicolumn{4}{c}{KDBB}                                                                                \\
    \multicolumn{1}{c}{}                           & \multicolumn{1}{c}{}                     & \multicolumn{1}{c|}{}                     & \multicolumn{1}{r}{\hspace{-0.5em}$|V'|$} & \multicolumn{1}{r}{\hspace{-1.1em}$|E'|$} & \multicolumn{1}{r}{\hspace{-1.1em}Tree} & \multicolumn{1}{r:}{\hspace{-1.1em}Time} & \multicolumn{1}{r}{\hspace{-0.5em}$|V'|$} & \multicolumn{1}{r}{\hspace{-1.1em}$|E'|$} & \multicolumn{1}{r}{\hspace{-1.1em}Tree} & \multicolumn{1}{r|}{\hspace{-1.1em}Time} & \multicolumn{1}{r}{\hspace{-0.5em}$|V'|$} & \multicolumn{1}{r}{\hspace{-1.1em}$|E'|$} & \multicolumn{1}{r}{\hspace{-1.1em}Tree} & \multicolumn{1}{r:}{\hspace{-1.1em}Time} & \multicolumn{1}{r}{\hspace{-0.5em}$|V'|$} & \multicolumn{1}{r}{\hspace{-1.1em}$|E'|$} & \multicolumn{1}{r}{\hspace{-1.1em}Tree} & \multicolumn{1}{r}{\hspace{-1.1em}Time} \\ \hline
    C125-9                                         & \hspace{-1.1em}125                                      & \hspace{-1.1em}6963                                      & \hspace{-0.5em}125                     & \hspace{-1.1em}6963                    & \hspace{-1.1em}\textbf{35.8}           & \hspace{-1.1em}\textbf{23.8}           & \hspace{-0.5em}125                     & \hspace{-1.1em}6963                    & \hspace{-1.1em}NA                       & \hspace{-1.1em}NA                        & \hspace{-0.5em}125                     & \hspace{-1.1em}6963                    & \hspace{-1.1em}\textbf{536}           & \hspace{-1.1em}\textbf{137}           & \hspace{-0.5em}125                     & \hspace{-1.1em}6963                    & \hspace{-1.1em}NA                       & \hspace{-1.1em}NA                       \\
    gen200-p0-9-55                                 & \hspace{-1.1em}200                                      & \hspace{-1.1em}17910                                     & \hspace{-0.5em}200                     & \hspace{-1.1em}17910                   & \hspace{-1.1em}\textbf{45.7}           & \hspace{-1.1em}\textbf{102}           & \hspace{-0.5em}200                     & \hspace{-1.1em}17910                   & \hspace{-1.1em}NA                       & \hspace{-1.1em}NA                        & \hspace{-0.5em}200                     & \hspace{-1.1em}17910                   & \hspace{-1.1em}\textbf{325}           & \hspace{-1.1em}\textbf{275}           & \hspace{-0.5em}200                     & \hspace{-1.1em}17910                   & \hspace{-1.1em}NA                       & \hspace{-1.1em}NA                       \\
    ia-wiki-Talk                                   & \hspace{-1.1em}92117                                    & \hspace{-1.1em}360767                                    & \hspace{-0.5em}179                     & \hspace{-1.1em}2244                    & \hspace{-1.1em}\textbf{0.04}           & \hspace{-1.1em}\textbf{2.95}           & \hspace{-0.5em}1119                    & \hspace{-1.1em}46194                   & \hspace{-1.1em}NA                       & \hspace{-1.1em}NA                        & \hspace{-0.5em}2492                    & \hspace{-1.1em}68463                   & \hspace{-1.1em}\textbf{334}           & \hspace{-1.1em}\textbf{1420}            & \hspace{-0.5em}9424                    & \hspace{-1.1em}155317                  & \hspace{-1.1em}NA                       & \hspace{-1.1em}NA                       \\
    johnson8-4-4                                   & \hspace{-1.1em}70                                       & \hspace{-1.1em}1855                                      & \hspace{-0.5em}70                      & \hspace{-1.1em}1855                    & \hspace{-1.1em}\textbf{3.72}           & \hspace{-1.1em}\textbf{0.96}           & \hspace{-0.5em}70                      & \hspace{-1.1em}1855                    & \hspace{-1.1em}184                    & \hspace{-1.1em}33.6                     & \hspace{-0.5em}70                      & \hspace{-1.1em}1855                    & \hspace{-1.1em}\textbf{3436}            & \hspace{-1.1em}\textbf{339}           & \hspace{-0.5em}70                      & \hspace{-1.1em}1855                    & \hspace{-1.1em}NA                       & \hspace{-1.1em}NA                       \\
    MANN-a27                                       & \hspace{-1.1em}378                                      & \hspace{-1.1em}70551                                     & \hspace{-0.5em}378                     & \hspace{-1.1em}70551                   & \hspace{-1.1em}\textbf{102}           & \hspace{-1.1em}\textbf{460}           & \hspace{-0.5em}378                     & \hspace{-1.1em}70551                   & \hspace{-1.1em}NA                       & \hspace{-1.1em}NA                        & \hspace{-0.5em}378                     & \hspace{-1.1em}70551                   & \hspace{-1.1em}\textbf{100}           & \hspace{-1.1em}\textbf{469}           & \hspace{-0.5em}378                     & \hspace{-1.1em}70551                   & \hspace{-1.1em}NA                       & \hspace{-1.1em}NA                       \\
    san200-0-7-1                                   & \hspace{-1.1em}200                                      & \hspace{-1.1em}13930                                     & \hspace{-0.5em}200                     & \hspace{-1.1em}13930                   & \hspace{-1.1em}\textbf{0.87}           & \hspace{-1.1em}\textbf{2.79}           & \hspace{-0.5em}200                     & \hspace{-1.1em}13930                   & \hspace{-1.1em}NA                       & \hspace{-1.1em}NA                        & \hspace{-0.5em}200                     & \hspace{-1.1em}13930                   & \hspace{-1.1em}\textbf{36.9}           & \hspace{-1.1em}\textbf{34.2}           & \hspace{-0.5em}200                     & \hspace{-1.1em}13930                   & \hspace{-1.1em}NA                       & \hspace{-1.1em}NA                       \\
    %scc\_rt\_lebanon                               & \hspace{-1.1em}3370                                     & \hspace{-1.1em}5                                         & \hspace{-1.1em}3370                    & \hspace{-1.1em}5                       & \hspace{-1.1em}\textbf{0.00}           & \hspace{-1.1em}\textbf{1.17}           & \hspace{-1.1em}3370                    & \hspace{-1.1em}5                       & \hspace{-1.1em}1135                     & \hspace{-1.1em}169                     & \hspace{-1.1em}3370                    & \hspace{-1.1em}5                       & \hspace{-1.1em}\textbf{0.01}           & \hspace{-1.1em}\textbf{1.18}           & \hspace{-1.1em}3370                    & \hspace{-1.1em}5                       & \hspace{-1.1em}NA                       & \hspace{-1.1em}NA                       \\
    Slashdot0811                                   & \hspace{-1.1em}77360                                    & \hspace{-1.1em}469180                                    & \hspace{-0.5em}138                     & \hspace{-1.1em}5411                    & \hspace{-1.1em}\textbf{0.14}           & \hspace{-1.1em}\textbf{0.65}           & \hspace{-0.5em}246                     & \hspace{-1.1em}9888                    & \hspace{-1.1em}65.7                    & \hspace{-1.1em}80.9                     & \hspace{-0.5em}210                     & \hspace{-1.1em}7787                    & \hspace{-1.1em}\textbf{15.3}           & \hspace{-1.1em}\textbf{10.0}           & \hspace{-0.5em}503                     & \hspace{-1.1em}16319                   & \hspace{-1.1em}428                    & \hspace{-1.1em}75.8                    \\
    soc-digg                                       & \hspace{-1.1em}770799                                   & \hspace{-1.1em}5907132                                   & \hspace{-0.5em}199                     & \hspace{-1.1em}13551                   & \hspace{-1.1em}\textbf{0.61}           & \hspace{-1.1em}\textbf{70.0}           & \hspace{-0.5em}5580                    & \hspace{-1.1em}875610                  & \hspace{-1.1em}NA                       & \hspace{-1.1em}NA                        & \hspace{-0.5em}519                     & \hspace{-1.1em}37458                   & \hspace{-1.1em}\textbf{29.3}           & \hspace{-1.1em}\textbf{214}           & \hspace{-0.5em}7349                    & \hspace{-1.1em}1141673                 & \hspace{-1.1em}NA                       & \hspace{-1.1em}NA                       \\
    socfb-Cornell5                                 & \hspace{-1.1em}18660                                    & \hspace{-1.1em}790777                                    & \hspace{-0.5em}211                     & \hspace{-1.1em}6552                    & \hspace{-1.1em}\textbf{0.06}           & \hspace{-1.1em}\textbf{4.45}           & \hspace{-0.5em}1292                    & \hspace{-1.1em}52540                   & \hspace{-1.1em}22.6                    & \hspace{-1.1em}89.2                     & \hspace{-0.5em}541                     & \hspace{-1.1em}15848                   & \hspace{-1.1em}\textbf{0.83}           & \hspace{-1.1em}\textbf{22.6}           & \hspace{-0.5em}2174                    & \hspace{-1.1em}84381                   & \hspace{-1.1em}44.7                    & \hspace{-1.1em}107                    \\
    socfb-Indiana                                  & \hspace{-1.1em}29732                                    & \hspace{-1.1em}1305757                                   & \hspace{-0.5em}136                     & \hspace{-1.1em}4394                    & \hspace{-1.1em}\textbf{0.04}           & \hspace{-1.1em}\textbf{7.92}           & \hspace{-0.5em}1596                    & \hspace{-1.1em}57741                   & \hspace{-1.1em}2.84                    & \hspace{-1.1em}22.2                     & \hspace{-0.5em}844                     & \hspace{-1.1em}29290                   & \hspace{-1.1em}\textbf{1.07}           & \hspace{-1.1em}\textbf{17.5}           & \hspace{-0.5em}2884                    & \hspace{-1.1em}104681                  & \hspace{-1.1em}16.8                    & \hspace{-1.1em}152                    \\
    socfb-OR                                       & \hspace{-1.1em}63392                                    & \hspace{-1.1em}816886                                    & \hspace{-0.5em}152                     & \hspace{-1.1em}2901                    & \hspace{-1.1em}\textbf{0.08}           & \hspace{-1.1em}\textbf{2.09}           & \hspace{-0.5em}1440                    & \hspace{-1.1em}35123                   & \hspace{-1.1em}5.21                    & \hspace{-1.1em}12.0                     & \hspace{-0.5em}621                     & \hspace{-1.1em}14999                   & \hspace{-1.1em}\textbf{2.65}           & \hspace{-1.1em}\textbf{23.5}           & \hspace{-0.5em}4910                    & \hspace{-1.1em}117554                  & \hspace{-1.1em}42.2                    & \hspace{-1.1em}511                    \\
    socfb-Penn94                                   & \hspace{-1.1em}41536                                    & \hspace{-1.1em}1362220                                   & \hspace{-0.5em}71                      & \hspace{-1.1em}2228                    & \hspace{-1.1em}\textbf{0.04}           & \hspace{-1.1em}\textbf{4.69}           & \hspace{-0.5em}875                     & \hspace{-1.1em}23454                   & \hspace{-1.1em}1.07                    & \hspace{-1.1em}4.86                     & \hspace{-0.5em}237                     & \hspace{-1.1em}6564                    & \hspace{-1.1em}\textbf{0.42}           & \hspace{-1.1em}\textbf{7.03}           & \hspace{-0.5em}2087                    & \hspace{-1.1em}58201                   & \hspace{-1.1em}2.86                    & \hspace{-1.1em}34.3                    \\
    socfb-Rice31                                   & \hspace{-1.1em}4087                                     & \hspace{-1.1em}184828                                    & \hspace{-0.5em}83                      & \hspace{-1.1em}1601                    & \hspace{-1.1em}\textbf{0.03}           & \hspace{-1.1em}\textbf{2.36}           & \hspace{-0.5em}1170                    & \hspace{-1.1em}36409                   & \hspace{-1.1em}10.9                    & \hspace{-1.1em}16.9                     & \hspace{-0.5em}921                     & \hspace{-1.1em}25731                   & \hspace{-1.1em}\textbf{2.34}           & \hspace{-1.1em}\textbf{35.2}           & \hspace{-0.5em}2161                    & \hspace{-1.1em}87191                   & \hspace{-1.1em}80.3                    & \hspace{-1.1em}142                    \\
    %socfb-Tennessee95                              & \hspace{-1.1em}16979                                    & \hspace{-1.1em}770659                                    & \hspace{-0.5em}87                      & \hspace{-1.1em}3604                    & \hspace{-1.1em}\textbf{0.06}           & \hspace{-1.1em}\textbf{4.84}           & \hspace{-0.5em}802                     & \hspace{-1.1em}31594                   & \hspace{-1.1em}33.0                    & \hspace{-1.1em}19.2                     & \hspace{-0.5em}199                     & \hspace{-1.1em}8737                    & \hspace{-1.1em}\textbf{0.64}           & \hspace{-1.1em}\textbf{10.0}           & \hspace{-0.5em}1352                    & \hspace{-1.1em}55156                   & \hspace{-1.1em}72.4                    & \hspace{-1.1em}36.9                    \\
    socfb-Texas84                                  & \hspace{-1.1em}36364                                    & \hspace{-1.1em}1590651                                   & \hspace{-0.5em}129                     & \hspace{-1.1em}6606                    & \hspace{-1.1em}\textbf{0.15}           & \hspace{-1.1em}\textbf{8.99}           & \hspace{-0.5em}927                     & \hspace{-1.1em}41536                   & \hspace{-1.1em}85.4                    & \hspace{-1.1em}101                     & \hspace{-0.5em}461                     & \hspace{-1.1em}18985                   & \hspace{-1.1em}\textbf{17.6}           & \hspace{-1.1em}\textbf{40.7}           & \hspace{-0.5em}1494                    & \hspace{-1.1em}66131                   & \hspace{-1.1em}601                    & \hspace{-1.1em}399                    \\
    socfb-UF                                       & \hspace{-1.1em}35111                                    & \hspace{-1.1em}1465654                                   & \hspace{-0.5em}299                     & \hspace{-1.1em}10983                   & \hspace{-1.1em}\textbf{0.06}           & \hspace{-1.1em}\textbf{15.2}           & \hspace{-0.5em}1612                    & \hspace{-1.1em}75851                   & \hspace{-1.1em}54.7                    & \hspace{-1.1em}82.7                     & \hspace{-0.5em}975                     & \hspace{-1.1em}42128                   & \hspace{-1.1em}\textbf{5.51}           & \hspace{-1.1em}\textbf{111}           & \hspace{-0.5em}2088                    & \hspace{-1.1em}98361                   & \hspace{-1.1em}237                    & \hspace{-1.1em}159                    \\
    socfb-UGA50                                    & \hspace{-1.1em}24389                                    & \hspace{-1.1em}1174057                                   & \hspace{-0.5em}125                     & \hspace{-1.1em}6457                    & \hspace{-1.1em}\textbf{0.10}           & \hspace{-1.1em}\textbf{7.87}           & \hspace{-0.5em}1276                    & \hspace{-1.1em}55310                   & \hspace{-1.1em}80.3                    & \hspace{-1.1em}103                     & \hspace{-0.5em}457                     & \hspace{-1.1em}18956                   & \hspace{-1.1em}\textbf{11.4}           & \hspace{-1.1em}\textbf{39.8}           & \hspace{-0.5em}2093                    & \hspace{-1.1em}89760                   & \hspace{-1.1em}472                    & \hspace{-1.1em}286                    \\
    socfb-Uillinois                                & \hspace{-1.1em}30795                                    & \hspace{-1.1em}1264421                                   & \hspace{-0.5em}104                     & \hspace{-1.1em}4828                    & \hspace{-1.1em}\textbf{0.06}           & \hspace{-1.1em}\textbf{6.58}           & \hspace{-0.5em}586                     & \hspace{-1.1em}25096                   & \hspace{-1.1em}14.0                    & \hspace{-1.1em}13.1                     & \hspace{-0.5em}232                     & \hspace{-1.1em}11219                   & \hspace{-1.1em}\textbf{1.20}           & \hspace{-1.1em}\textbf{8.37}           & \hspace{-0.5em}1028                    & \hspace{-1.1em}43222                   & \hspace{-1.1em}76.2                    & \hspace{-1.1em}27.8                    \\
    socfb-Vassar85                                 & \hspace{-1.1em}3068                                     & \hspace{-1.1em}119161                                    & \hspace{-0.5em}80                      & \hspace{-1.1em}1220                    & \hspace{-1.1em}\textbf{0.02}           & \hspace{-1.1em}\textbf{0.65}           & \hspace{-0.5em}940                     & \hspace{-1.1em}22547                   & \hspace{-1.1em}0.55                    & \hspace{-1.1em}3.36                     & \hspace{-0.5em}496                     & \hspace{-1.1em}9340                    & \hspace{-1.1em}\textbf{1.40}           & \hspace{-1.1em}\textbf{18.3}           & \hspace{-0.5em}2591                    & \hspace{-1.1em}99363                   & \hspace{-1.1em}7.66                    & \hspace{-1.1em}221                    \\
    socfb-Yale4                                    & \hspace{-1.1em}8578                                     & \hspace{-1.1em}405450                                    & \hspace{-0.5em}140                     & \hspace{-1.1em}3145                    & \hspace{-1.1em}\textbf{0.03}           & \hspace{-1.1em}\textbf{2.62}           & \hspace{-0.5em}1033                    & \hspace{-1.1em}28691                   & \hspace{-1.1em}4.35                    & \hspace{-1.1em}7.87                     & \hspace{-0.5em}809                     & \hspace{-1.1em}20248                   & \hspace{-1.1em}\textbf{1.19}           & \hspace{-1.1em}\textbf{31.4}           & \hspace{-0.5em}2883                    & \hspace{-1.1em}95160                   & \hspace{-1.1em}31.4                    & \hspace{-1.1em}161                    \\
    Wiki-Vote                                      & \hspace{-1.1em}7115                                     & \hspace{-1.1em}100762                                    & \hspace{-0.5em}85                      & \hspace{-1.1em}1602                    & \hspace{-1.1em}\textbf{0.10}           & \hspace{-1.1em}\textbf{1.39}           & \hspace{-0.5em}902                     & \hspace{-1.1em}34228                   & \hspace{-1.1em}182                    & \hspace{-1.1em}431                     & \hspace{-0.5em}1378                    & \hspace{-1.1em}40535                   & \hspace{-1.1em}\textbf{40.4}           & \hspace{-1.1em}\textbf{171}           & \hspace{-0.5em}2233                    & \hspace{-1.1em}72679                   & \hspace{-1.1em}NA                       & \hspace{-1.1em}NA            \\ \bottomrule          
    \end{tabular}
    % }
    \caption{Detailed results of KD-Club and KDBB on 20 representative MDCP instances from four benchmarks with $k = 3$ and $k=10$. The search tree size is in $10^4$, and the time is in seconds. The better results appear in bold.}
    \label{tab-detailed-result}
    \end{table*}

\begin{figure*}[!t]
    \centering
    \subfigure[On instances with $k=3$] {
    \includegraphics[width=0.85\columnwidth]{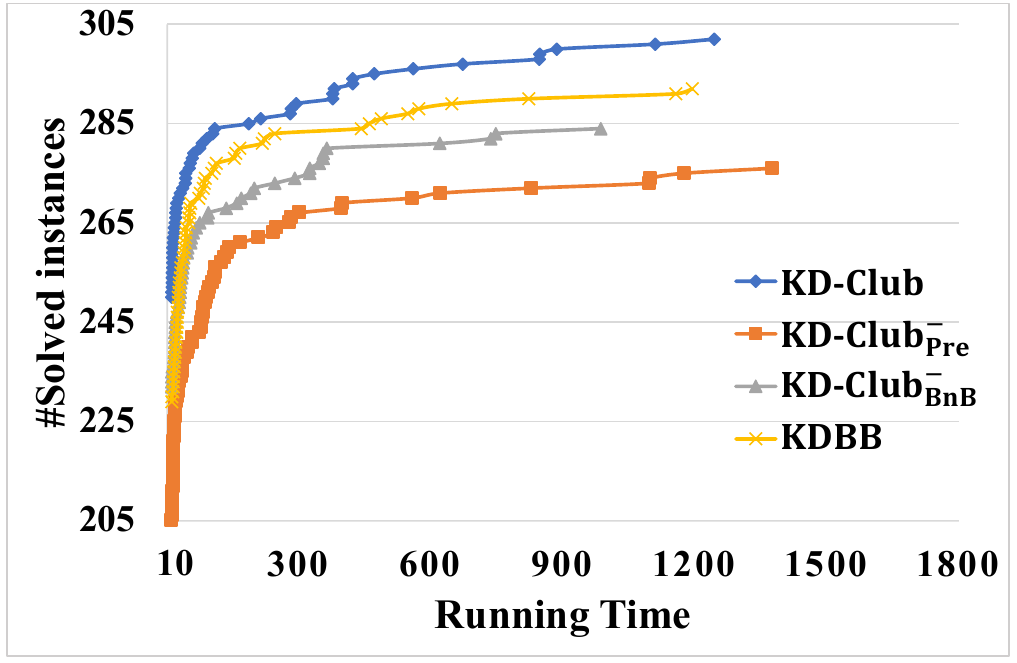}
    \label{fig:ablation3}}~~~~
    \subfigure[On instances with $k=10$]{
    \includegraphics[width=0.85\columnwidth]{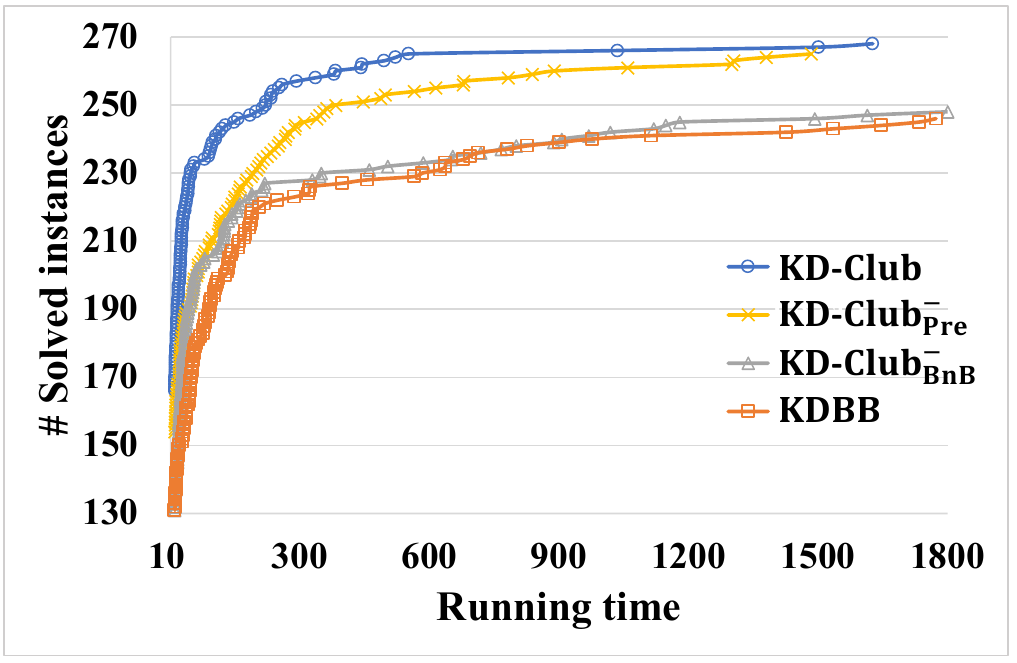}}
    \caption{Ablation study on MDCP instances over all the 341 graphs with 
    %$k = 3$ or $k = 10$.
    different $k$ values. 
    }
\label{fig-Ablation}
\end{figure*}

\subsection{Ablation Study}
%For ablation study, 
We then compare KD-Club with its two variants, KD-Club$_\text{Pre}^-$ and KD-Club$_\text{BnB}^-$, as well as KDBB to evaluate the effectiveness of CLUB in preprocessing and BnB searching stages. The results are shown in Figure~\ref{fig-Ablation}, where we present the variation of the number of solved instances with $k = 3$ and $k = 10$ for each algorithm over all the 341 graphs during the running time (in seconds). To present more clearly, we omit the results of easy instances solved within 10 seconds for each algorithm.

%From the results one can observe that 
In general, KD-Club yields better performance than the two variants, and the two variants perform better than KDBB, indicating that using CLUB in both preprocessing and BnB stages can improve the performance of the BnB algorithm. Moreover, when solving instances with $k = 10$, the performance of KD-Club and KD-Club$_\text{Pre}^-$ is similar, and the performance of KD-Club$_\text{BnB}^-$ and KDBB is similar. This is because the larger the value of $k$, the smaller the size of the graph that can be reduced by preprocessing, and the worse the effectiveness of preprocessing. On the other hand, when solving instances with large $k$ values, the BnB searching process based on CLUB also shows excellent performance, helping KD-Club significantly outperform KD-Club$_\text{BnB}^-$, and KD-Club$_\text{Pre}^-$ significantly outperform KDBB.

\section{Conclusion}
For the NP-hard Maximum $k$-Defective Clique Problem (MDCP), we proposed a novel CoLoring-based Upper Bound (CLUB) and then a new BnB algorithm called KD-Club. CLUB considers the missing edges that are ignored by previous methods, and uses graph coloring techniques in an ingenious way to calculate lower bounds on the number of missing edges, so as to obtain a tighter upper bound. % of the size of the maximum $k$-defective clique that can be extended from the current partial solution. 
By using CLUB %to reduce the graph and prune the branches
on graph reduction and branch pruning, KD-Club significantly outperforms state-of-the-art BnB MDCP algorithms and exhibits excellent performance and robustness on instances based on either massive sparse or dense graphs, with either small or large $k$ values.

In future work, we plan to apply our approach to improve the upper bounds used in BnB algorithms for other combinatorial optimization problems, such as %problems related to other relaxation clique models. 
other relaxation clique problems. 
In fact, existing BnB algorithms might ignore part of the relationship between elements in the problem to be solved, and could be greatly improved via a similar relationship detection method.

%The advantage of CLUB is that it not only considers the connectivity between the partial solution and the candidate set, but also fully considers the connectivity between the candidate vertices. Extensive experimental results show that the proposed KD-Club outperforms existing algorithms on both large sparse graphs and dense graphs. In future work, we plan to apply our method to other relaxed cliques problems.

\setcounter{secnumdepth}{0}

\section*{Acknowledgments}
This work is supported by National Natural Science Foundation (U22B2017).

\bibliography{aaai24}

\end{document}